\newtheorem{thm}{Theorem}
\newtheorem{lem}[thm]{Lemma}
\newenvironment{proof}[1][]{\par\vskip12pt\noindent%
     {\it Proof\ \ifx#1==\else{ #1\/:}\newline\fi\ }\bgroup}
     {\egroup\quad\strut\hfill\hbox{\rule[-1pt]{8pt}{8pt}}\vskip12pt}
\def\eqref#1{(\ref{#1})}
\def\secref#1{Sect.~\ref{sec:#1}}
\def\idty{{\leavevmode\rm 1\mkern -5.4mu I}} 
\def\Rl{{\mathbb R}}\def\Cx{{\mathbb C}}
\def\norm #1{\Vert #1\Vert}
\def\bra #1{\langle #1\vert}
\def\ket #1{\vert #1\rangle}
\def\braket #1#2{\langle #1 \vert #2\rangle}
\def\ketbra #1#2{\vert #1\rangle \langle #2\vert}
\def\kettbra#1{\ketbra{#1}{#1}}
\def\tr{\mathop{\rm tr}\nolimits}
\def\abs#1{\vert#1\vert}
\mathchardef\ree="023C \mathchardef\imm="023D  
\renewcommand{\Re}{\ree e}
\def\flip{{\mathbb F}}
\def\BB{{\mathcal B}}
\def\HH{{\mathcal H}}
\def\KK{{\mathcal K}}
\def\L#1{{\mathcal L}^{#1}} 
\def\Fock{\Gamma}
\def\dFock{{\mathrm d}\Fock}
\def\tot#1{^{\otimes#1}}
\def\adj{^*}
\def\adj{^\dagger} 
\def\hatsigma{\widehat\sigma}
\def\rtsigma{\sqrt{\hatsigma}}
\def\mbr#1#2{\left\lceil#1{:}#2\right\rfloor}
\def\mbr#1#2{\ifx#1\cdot\xi\else#1\fi{\left[#2\right]}}
\def\dirintE{\int^\oplus\!\!dE\,}
\def\factmom{\widehat m}
\def\factcf{\widehat C}
\newcommand {\fexp} [1] {\exp \left( #1 \right)}
\newcommand {\fabsq}[1] {\left| #1 \right|^2}
\begin{document}
\title{Full Counting Statistics of Stationary Particle Beams}

\author{J. Kiukas}
\author{A. Ruschhaupt}
\author{R.F. Werner}

\affiliation{Inst. Theoret. Physik, Leibniz Universit\"at Hannover,
 Appelstr. 2, 30167 Hannover, Germany}

\begin{abstract}
We present a general scheme for treating particle beams as many particle systems. This includes the full counting statistics and the requirements of Bose/Fermi symmetry. In the stationary limit, i.e., for longer and longer beams, the total particle number diverges, and a description in Fock space is no longer possible. We therefore extend the formalism to include stationary beams. These beams exhibit a well-defined ``local'' counting statistics, by which we mean the full counting statistics of all clicks falling into any given finite interval.
We treat in detail a model of a source, creating particles in a fixed state, which then evolve under the free time evolution, and we determine the resulting stationary beam in the far field. In comparison to the one-particle picture we obtain a correction due to Bose/Fermi statistics, which depends on the emission rate. We also consider plane waves as stationary many particle states, and determine the distribution of intervals between successive clicks in such a beam.
\end{abstract}
\pacs{03.65.Ca, 02.50.Ey, 37.20.+j}

\maketitle
\section{Introduction}
In the standard framework of quantum mechanics each type of systems is assigned a Hilbert space $\HH$. Preparations are given by density operators on $\HH$, and measurements are given by operator valued measures on some outcome set. The basic predictions of the theory are probabilities, i.e., the asymptotic relative frequency, to be determined in a sufficiently long run experiments with the same preparation and measurement. However, many experiments do not follow this standard scheme: instead one often prepares a beam of particles. On the measurement side one then uses detectors, whose readout is a time series of clicks. Instead of probabilities the basic experimental results are then count rates. There are many reasons to study the relationship between these two models of quantum experiments. We were led to it in the course of a project on the tomography of single photon sources. Clearly, in this case it is relatively easy to get beam-type data, in which, for example, the antibunching dip in the $g^{(2)}$-correlation function indicates single particle emission \cite{Milburn,Yuan}. On the other hand, the emission time of individual particles is not controlled in beam mode, so absolute click times are meaningless. Hence it is impossible to get a full tomography of the single particle state created by such a source on demand; beam data and single-shot data have to be combined.

It is clear that beams can be described as many-particle quantum systems, and it is natural to use the standard Fock space for Bosons and Fermions. This will indeed be our starting point. However, the Fock space arena is too restrictive for describing truly stationary beams, which necessarily contain infinitely many particles. The full counting statistics of stationary beams will be obtained by taking the limit of longer and longer finite beams in a suitable way. The resulting description will be easier than that of finite beams, which must implicitly always contain the particulars of switching the beam on and off. Demonstrating this, and providing a theoretical framework for stationary beams is the main goal of this paper. On the measurement side this involves observables, whose outcome space is a (usually infinite) set of counting events. They are constructed directly in terms of one-particle observables, and do not rely on the identification of field intensities and count rates made in quantum optics via the Glauber model \cite{Glauber}.
On the side of the sources we focus on the type of sources, which make the connection to the one-particle picture most apparent, and which best realize  the idea of a train of independent particles. Technically, these are gauge invariant quasi-free states. We are, of course, aware that that this excludes many interesting phenomena, particularly for photons. But in our ongoing research on the subject we found the quasi-free beams to be an important starting point for more complex situations, e.g. by conditioning.

In order to make the paper more self-contained, we include brief statements of the relevant prerequisites, like the theories of counting processes \cite{Daley,Benard,BenardMacchi}, Fredholm determinants \cite{SimonTrace}, and quasi-free states \cite{CCR,CAR}. Processes rather similar to the ones we find have been discussed in the mathematical literature under the heading of ``determinantal processes''\cite{Shoshnikov}. However, to the best of our knowledge, these were always considered to be stationary in space \cite{Fichtner,Lytvynov} rather than in time. Stationarity in time requires some extra work, e.g., bringing in covariant arrival time observables on the one particle level, and the second quantization of generalized (POVM) observables. In the end, an appropriate ``local trace class'' condition characterizing the particle sources can be formulated quite simply also in the time case. In this we see the main contribution of our paper, although we also hope that for some readers it will serve as an invitation to a more systematic view of quantum counting processes.

Our paper is organized as follows. In \secref{2quant}, we introduce some notation, explain the concept of a point process, describe the construction of general counting observables, and give an account of arrival time measurements. In \secref{qfree}, we introduce the class of quasi-free states in Fock space, and compute the full counting statistics for such states. We also give the essential technical result which makes it possible to extend the results to the local statistics for stationary beams. \secref{source} is devoted to a concrete way of going to the stationary limit starting from an explicit dynamical description of particle creation. In \secref{beams} we give a general scheme for describing local counting statistics for quasi-free stationary beams; in particular, we get explicit formulas for correlation functions. Finally, in \secref{examples}, we apply the results to the simplest example of a stationary beam, the plane wave viewed as a many-particle state. Often enough such an interpretation is suggested in textbook treatments of scattering solutions of the stationary Schr\"odinger equation for one-particle potential scattering. Here we take it literally, and as a bonus get correlations and waiting time distributions in such a beam.

\section{Counting Observables}\label{sec:2quant}
Let us fix some notation. Throughout, $\HH$ will be the Hilbert space of a single particle. By $\Fock_s(\HH)$ we denote the Fock space over $\HH$, with $s=+1$ for Bosons and $s=-1$ for Fermions. When it is irrelevant, or clear from the context, the index $s$ will be omitted. For an operator $A$ on $\HH$, we will denote by $\Fock_s(A)$ the operator, which on $N$-particle wave functions acts like the $N$-fold tensor power $A\tot N$. Clearly, $\Fock_s(AB)=\Fock_s(A)\Fock_s(B)$.

For operator valued or scalar valued measures $\mu$ we abbreviate the integral over a scalar function $f$ as
\begin{equation}\label{mbr}
    \mbr\mu f=\int\!\mu(dx)\ f(x),
\end{equation}
i.e., we use round brackets for the set function and square brackets for the integral, i.e., the expectation value functional in the case of a probability measure.

For observables (POVMs=``positive operator valued measures'') the discrete case, in which points have finite measure, is often used. Then $\mbr Ff=\sum_xf(x)F_x$, where $F_x$ are positive operators with $\mbr F1=\sum_xF_x\leq\idty$. The projection valued special case is characterized by $F_xF_y=\delta_{xy}F_x$, or in a form also valid in the continuous case: $\mbr F{fg}=\mbr F{f}\mbr F{g}$.

In most textbooks, observables are simply identified with self-adjoint operators $A$, which presupposes that $X\subset\Rl$ and takes $F$ as the spectral measure of $A$, so $A=\int\!F(dx)x=\mbr Fx$. The same measure also defines the one-parameter unitary group $\exp(itA)=\mbr F{e^{itx}}$ generated by $A$. Generators are second quantized by $A\mapsto\dFock(A)$, where $\Fock(\exp(itA))=\exp(it\dFock(A))$ for all $t$.
However, for the purpose of this paper it is much more appropriate to start from POVMs on some outcome space $X$, which need not be the real line. The point is that for the natural second quantization of such POVMs \cite{Wer89a} the outcome space also changes:  where the observable $F$ at the one-particle level gives the probability of outcomes $x\in X$, its second quantization $\Fock F$ will correspond to a measurement of $F$ on every particle, and the result of this measurement is a distribution of points in $X$. The probabilities for such measurement outcomes constitute a so-called  ``point process''.

\subsection{Point processes}
A point process is a probability measure on the space of outcomes, where each outcome is a collection of (not necessarily distinct) points in a set $X$. We can think of each outcome as a possibly infinite numbered list $(x(1),x(2),...)$ with $x(i)\in X$, with the understanding that the ordering of the elements is irrelevant but, in contrast to the set $\{x(1),x(2),...\}$, we do count the number of occurrences of each $x\in X$.
A way to express this compactly is to take as the outcome of a point process the  so-called {\it empirical measure} $$\xi=\sum_i\delta_{x(i)},$$ where $\delta_x$ denotes the point measure (with $\delta$-function density) at $x\in X$. A measure $\xi$ of this form is also called a counting measure, and is characterized among measures by the property that the measure of each set is an integer, namely the number of points $x_i$ in that set.
Using the bracket notation introduced above, we then have $\mbr\xi f=\sum_if(x(i))$ for an empirical measure $\xi=\sum_i\delta_{x(i)}$. Since this bracket is linear in $f$, we can use it to characterize the probability distribution of a point process by its Fourier transform, i.e., by the expectation $\langle\cdot\rangle$ of the function $\xi\mapsto\exp(i\mbr\xi f)$. This is called the {\it characteristic function}
\begin{equation}\label{countcf}
    C(f)=\left\langle e^{i\mbr\xi f}\right\rangle
\end{equation}
of the distribution and contains the full counting statistics. For example, consider $k$ disjoint subsets $X_\ell$ of $X$, and let $\chi_\ell$ denote the indicator function of $X_\ell$. Then
we get the characteristic function of the joint probability distribution for the number of counts in the sets $X_\ell$ as
\begin{eqnarray}
    \sum_{n_1,\ldots n_k} p(n_1,\ldots n_k) e^{i\sum_\ell \lambda_\ell n_\ell} &=& \left\langle e^{i\sum_\ell \lambda_\ell \xi(X_\ell)}\right\rangle\nonumber\\
    &=& C({\textstyle\sum_\ell \lambda_\ell\chi_\ell}).\label{jointcounts}
\end{eqnarray}
If the particle numbers are independent for every partition of $X$ into sets $X_\ell$, we have a {\it Poisson process}, which is characterized by a measure $\mu$ on $X$, called the {\it intensity measure} of the process, such that $p(n_1,\ldots n_k)=\prod_\ell \mu(X_\ell)^n/(n!) \exp(-\mu(X_\ell)$ and hence,
\begin{equation}\label{poissoncf}
    C(f)=\exp\int\!\!\mu(dx)\bigl(e^{if}-1\bigr).
\end{equation}

The $k^{\rm th}$ {\it moment} of the point process is defined as the uniquely determined permutation symmetric measure $m_k$ on $X^k$, satisfying
$$
\int m_k(dx_1\,\cdots dx_k)\prod_{j=1}^kf(x_j) = \langle\ \xi[f]^k \rangle.
$$
Using arbitrary functions $f$ on $X^k$, we can equivalently give the definition as
\begin{equation}\label{momdef}
\mbr{m_k}f=\left\langle\ \sum_{i_1,\cdots, i_k}f(x(i_1),\ldots,x(i_k)) \right\rangle.
\end{equation}
Since
\begin{equation}\label{moment}
 C(f) = \sum_k\frac{i^k}{k!}\int m_k(dx_1\,\cdots dx_k)\prod_{j=1}^kf(x_j),
 \end{equation}
we can extract the moments from $C(\lambda f)$ by differentiating with respect to $\lambda$.
For a Poisson process, the expansion of the characteristic function in powers of $f$ is $C(f)=i\mbr\mu f-\frac12(\mbr\mu{f^2}+\mbr\mu{f}^2)+{\bf O}(f^3)$, so the first moment is $m_1=\mu$ and the second is $m_2=\mu\otimes\mu+\mu\circ\Delta^{-1}$ with the diagonal map $\Delta(x)=(x,x)$. The second moment thus has a singular part concentrated on the diagonal. This is not a special feature of the Poisson process, but occurs for any counting process. It is therefore customary to consider a modified set of moments, called factorial moments \cite{Daley, Milburn}, or ``correlation functions'' \cite{Shoshnikov}, which do not have such singularities. Like the moment $m_k$, the factorial moment of order $k$, which we denote by $\factmom_k$, is a permutation symmetric measure on $X^k$. For a function $f$ of $k$ variables, the factorial moment is defined by the following expectation:
\begin{equation}\label{factmomdef}
   \mbr{\factmom_k}f=\left\langle\ \sum_{\genfrac{}{}{0pt}{1}{i_1,\cdots i_k}{ {\rm distinct}}}
        f(x(i_1),\ldots,x(i_k))\right\rangle.
\end{equation}
By comparing the expression \eqref{factmomdef} to \eqref{momdef} it is clear that the exclusion of multiply occurring indices in the former just has the effect of eliminating the singular term from the second moment. For the Poisson process one has  $\factmom_k=\mu\tot k$ for all $k$.

The factorial moments are most easily obtained from the characteristic function $C$ by observing that the generating function
\begin{equation}\label{factcf}
 \factcf(f) = \sum_k\frac1{k!}\int\factmom_k(dx_1\,\cdots dx_k)\prod_{j=1}^kf(x_j)
 \end{equation}
is related to $C$ just by a transformation of the argument:
\begin{equation}\label{factcfcf}
  C(f) = \factcf(e^{if}-1)
\end{equation}
As an example of using \eqref{factcfcf}, consider the probability $p_Y(n)$ of finding exactly $n$ particles in a measurable region $Y\subset X$. By \eqref{jointcounts} and \eqref{factcfcf}, we get the relation
\begin{equation}\label{numberdist}
    \sum_{n=0}^\infty p_Y(n)z^n
       = \factcf((z-1)\chi_Y)
\end{equation}
for $z=e^{i\lambda}$, where $\chi_Y$ is the indicator function of $Y$.
In particular, we get the no event probability directly from the factorial moment generating function by analytic continuation:
\begin{equation}\label{numbergen}
    p_Y(0)=\factcf(-\chi_Y)\ .
\end{equation}

These probabilities determine the {\it interval statistics} of a point process on the time axis $X=\Rl$. Indeed, let $p_0(t_1,t_2)=p_{[t_1,t_2)}(0)$ denote the probability for not finding a click in the interval $[t_1,t_2)$. Then the probability of having no click on $[t_1,t_2)$ and at least one click just before $t_1$, say in the interval $[t_1-\varepsilon,t_1)$, is the same as having no click on $[t_1,t_2)$ and at least one in $[t_1-\epsilon,t_2)$, which is $p_0(t_1,t_2)-p_0(t_1-\varepsilon,t_2)$. Hence the conditional probability for having no click on $[t_1,t_2)$, on the condition of having at least one in $[t_1-\epsilon,t_1)$, is
$$ \frac{p_0(t_1,t_2)-p_0(t_1-\varepsilon,t_2)}{1-p_0(t_1-\varepsilon,t_1)}\ .$$
At the limit $\varepsilon\to0$, this tends to $1-$ the probability of having to wait at most time $\tau =t_2-t_1$ for the next click, after a click at $t_1$. Hence the probability density $w_{s}(\tau)$ for the waiting time $\tau\in [0,\infty)$ given a click at $s$ is
\begin{equation}\label{nextclickdensity}
    w_s(\tau)=-\left(\frac{\partial p_0}{\partial t_1}(s,s)\right)^{-1}\frac{\partial^2p_0}{\partial t_1\partial t_2}(s,s+\tau).
\end{equation}

Since we will eventually apply the above formalism to particle detection processes, we close this subsection by a remark on the role of the factorial moment densities in the standard theory of photon counting used in quantum optics. If $X$ has a natural measure $dx$ (typically $X=\Rl^m$ with the Lebesgue measure), we can often write $\factmom_k(dx_1\ldots dx_k)= h^{(k)}(x_1,\ldots,x_k)dx_1\ldots dx_k$ for a density function $h^{(k)}$. In Glauber's model of a photon detection process, we have $$h^{(n)}(x_1,\ldots,x_n) = {\rm const}\,\cdot G^{(n)}(x_1,\ldots,x_n,x_n,\ldots,x_1),$$ where $G^{(n)}$ is the usual ``correlation function'' defined using the field operators \cite{Glauberbook,Milburn}. In this context, one also typically uses the \emph{normalized correlation functions}, which we define for a general point process by
\begin{equation}\label{g2}
    g^{(n)}(x_1,\ldots,x_n)= \frac{h^{(n)}(x_1,\ldots,x_n)}{h^{(1)}(x_1)\cdots h^{(1)}(x_n)}.
\end{equation}

\subsection{Second quantization of general observables}
One reason for introducing characteristic functions of point processes is that they make the construction of the second quantized observable $\Fock F$ from the single particle observable $F$ extremely simple \cite{Wer89a}. Indeed, if we just express the idea that $\Fock F$ measures $F$ on all the particles we get, restricted to the $N$-particle space, the operator
\begin{align*}
\left(\mbr{(\Fock F)}{e^{i\mbr\cdot f}}\right)_N
   & =\int\!\! F(dx_1)\otimes\cdots\otimes F(dx_N)e^{\sum_if(x_i)}\\
    &=\left(\mbr F{e^{if}}\right)\tot N.
    \end{align*}
Taking the direct sum over $N$, we get the fundamental formula
\begin{equation}\label{charFockF}
    \mbr{(\Fock F)}{e^{i\mbr\cdot f}}
    =\Fock\Bigl(\mbr F{e^{if}}\Bigr).
\end{equation}

This expression makes sense on full Fock space, i.e., it does not require restriction to the Bose or Fermi sector. Hence, for a state given by a density operator
$\rho$ on this space the full counting statistics can be extracted from the characteristic function
\begin{equation}\label{charfunc}
    C(f)=\tr\rho \mbr{(\Fock F)}{e^{i\mbr\cdot f}}= \tr\rho \Fock(\mbr F{e^{if}}).
\end{equation}
Of course, for a Bose or Fermi system, the operator $\rho$ has support in the appropriate subspace and we can replace the $\Gamma$ in this formula by the corresponding restriction $\Fock_s$.

The factorial moments can be computed from this using Eqs. \eqref{factcfcf} and \eqref{factcf}. For a state $\rho$ on full Fock space we get
\begin{eqnarray}\label{factopcf}
    \factcf(f)&=&\tr\rho\Fock(\idty+\mbr Ff)\nonumber\\
       &=&\sum_{N=0}^\infty\tr\rho_N (\idty+\mbr Ff)\tot N \nonumber\\
       &=&\sum_{k=0}^\infty\tr\hat\rho_k \mbr Ff\tot k \ ,
\end{eqnarray}
where $\hat\rho_k$ is the reduced $k$-particle reduced density operator
\begin{equation}\label{rhohat}
    \hat\rho_k=\sum_{N=k}^\infty \genfrac{(}{)}{0pt}{0}{N}{k}\tr_{[k+1,\ldots,N]}\rho_N.
\end{equation}
Hence the factorial moments are
\begin{equation}\label{mhatgeneral}
   \factmom_k(dx_1\cdots dx_k)=k!\tr\hat\rho_k F(dx_1)\otimes\cdots F(dx_k).
\end{equation}

\subsection{Arrival time observables and their dilations}%
\label{sec:timeobs}
Due to an old argument of Pauli, an arrival time observable cannot be a spectral measure of a self-adjoint ``time operator''. However, the generalization of the notion of observables to POVMs immediately allows time-shift covariant observables to be constructed \cite{Kijowski}. In this subsection we describe the general construction of observables, which measure the arrival time $t$ and arrival location $x$ of a particle \cite{Wer86}. Here location is taken in a rather broad sense, and could just be the number of the detector which responds. We consider arbitrary observables, which are covariant for time translations, i.e.,
\begin{equation}\label{timecov}
    \exp(iHt)F[f]\exp(-iHt)=F[\tau_tf],
\end{equation}
where $H$ is the Hamiltonian, and $\tau$ is the time shift on functions of $t$ and $x$, i.e., $(\tau_tf)(x,t')=f(x,t'-t)$. The standard method \cite{Wer86} to build all covariant observables (even for a general covariance group with representation $g\mapsto U_g$) involves two steps: one first uses the Naimark dilation theorem to turn any generalized (POVM) observable into a projection valued one, say $\widetilde F$, which lives on another Hilbert space $\widetilde\HH$ and is connected to $F$ by an isometry $V:\HH\to\widetilde\HH$ so that $\mbr Ff=V^*\mbr{\widetilde F}fV$. There is also a unitary group representation $\widetilde U$ on $\widetilde\HH$, for which $\widetilde F$ is covariant, and which is intertwined by $V$, i.e., $\widetilde U_gV=VU_g$. In the second step one uses the theory of Mackey \cite{Mackey} who called projection valued covariant observables ``systems of imprimitivity'' and showed their intimate connection to induced representations. This second part is easy for just the time translation group $\Rl$, and leads to standard Schr\"odinger pairs of position and momentum operators, with some multiplicity. Thus in the dilation space ``energy'' is the canonical multiplication operator canonically conjugated to ``time'' and has therefore purely absolutely continuous spectrum. The covariant time observable approach is therefore limited to Hamiltonians $H$ with absolutely continuous spectrum, in which case the Hilbert space is of direct integral form
\begin{equation}\label{dirintE}
    \HH=\dirintE\HH_E.
\end{equation}
This is shorthand for the space of wave functions, which are functions of energy such that $\psi(E)\in\HH_E$, the multiplicity space at $E$. This will, of course, be $\{0\}$ when $E$ is not in the spectrum of $H$ (e.g. when $E<0$ for the standard kinetic energy). Scalar products are computed as
\begin{equation}\label{scpE}
    \braket\phi\psi_{\mathcal{H}}=\int dE\ \braket{\phi(E)}{\psi(E)}_{\mathcal{H}_E}
\end{equation}
with the scalar product of $\HH_E$. The technical (measurability) conditions on direct integral Hilbert spaces are to ensure that this expression makes sense (see e.g. \cite{Kadison}). Of course, the Hamiltonian is the multiplication operator $(H\psi)(E)=E\psi(E)$ in this representation. More generally, a bounded operator $A$ commutes with $H$ if $(A\psi)(E)=A_E\psi(E)$ for some measurable family of operators $A_E\in\BB(\HH_E)$. We write for this
\begin{equation}\label{dirintA}
    A=\dirintE A_E.
\end{equation}

If a space \eqref{dirintE} allows a projection valued covariant time observable, and hence a self-adjoint conjugate time operator, this operator generates a unitary group which shifts the energy variable. It thus introduces a canonical identification between all the spaces $\HH_E$. In particular, they must be non-zero also for negative energies, which is exactly the above-mentioned argument of Pauli that semi-bounded Hamiltonians do not allow a projection valued time observable. Nevertheless, this structure appears as the dilation of any given time observable. The Hilbert space in that case can be written either as the tensor product $\L2(\Rl,dE)\otimes\KK$ or, in the spirit of \eqref{dirintE}, as the space of $\KK$-valued $\L2$-functions on $\Rl$. The time observable in this case is computed in the usual way by Fourier-transforming to $\L2(\Rl,dt)\otimes\KK$, and the joint measurement of $t$ and $x$ is realized in this tensor product.
It is thus characterized by the following data:
\begin{enumerate}
\item a Hilbert space $\KK$, which will be the energy-independent multiplicity space of the dilated observable,
\item a family of isometries $V_E:\HH_E\to\KK$, which together define the dilation
isometry $V:\HH\to\L2(\Rl,dE)\otimes\KK$ via $(V\psi)(E)=V_E\psi(E)$, and
\item an observable $G$ with outcome space $X$ in the Hilbert space $\KK$.
\end{enumerate}
We have to compute the expectation operator $\mbr Fh$ for arbitrary functions $f$ of $(t,x)$ but it suffices to do this for the product functions $f(t,x)=h(t)g(x)$. For these the above data determine the operator
\begin{equation}\label{Fh}
    \mbr F{hg}\psi(E)=\int\!\! dE'\ \widehat h(E-E')\ V_E^*\ \mbr Gg V_{E'}\psi(E'),
\end{equation}
where $\widehat h$ denotes the Fourier transform of $h$, normalized as
\begin{equation}\label{fourierf}
    \widehat h(E)=\frac1{2\pi}\int dt\ e^{iEt} h(t).
\end{equation}
This ensures that for $g=1$ and $h\nearrow1$ we find $F(h)\nearrow\idty$, so the observable $F$ is normalized. It is convenient to allow also subnormalized observables, i.e., $F(1)\leq\idty$. In that case the operator $\idty-F(1)$ measures the probability that the particle never arrives. By construction this operator will commute with $H$, and the only modification in the above setup is to allow $V$ to be a general operator with $\norm V\leq1$, rather than an isometry.

\section{Quasi-free states}\label{sec:qfree}
\subsection{Physical background}
Let us begin with a Boltzmann statistical model of a multi-particle preparation: Suppose we have a one-particle preparation with density operator $\sigma_0$, which we run at $N$ random times $t_i$. Hence, if $\sigma(t)=e^{-iHt}\sigma_0 e^{iHt}$ is the time translate of $\sigma_0$, we get the state $\bigotimes_{i=1}^N\sigma(t_i)$. We ignore for the moment the symmetrization requirements, so we apply the observable $F$ to each of these systems separately, obtaining a point $x_i$ as a measuring result. In order to determine the characteristic function of the counting statistics we need the distribution of the emission times, which we take to be Poisson with intensity measure $\mu$ (i.e., with characteristic function $C_{\rm time}(g)=\exp\int\mu(dt)(e^{ig}-1)$, see \eqref{poissoncf}). From this we get the characteristic function of the counts $x_i$ as
\begin{equation}\label{cfBoltz}
    C(f)=\exp\int\mu(dt)\tr\sigma(t)F(e^{if}-1).
\end{equation}
This is again Poisson, and depends only on the integral $\sigma=\int\mu(dt)\sigma(t)$. The corresponding state on full Fock space is $\bigoplus_N\frac1{N!}\sigma\tot N$, where the factorial is the usual correction factor for indistinguishability familiar in classical statistical mechanics.

Of course, this state is not consistent with Bose or Fermi statistics. Its closest analogue is to replace $\frac1{N!}\sigma\tot N$, by $P_s\sigma\tot NP_s$, where $P_s$ denotes the projection onto the (anti-)symmetric subspace of $\HH\tot N$. That is, we consider the {\it quasi-free state} with density operator
\begin{equation}\label{qfree}
   \rho= \frac{\Fock_s(\sigma)}{\tr\Fock_s(\sigma)}
\end{equation}
on $\Fock_s\HH$. Here we do not take $\sigma$ to be normalized. Instead the normalization factor of $\sigma$ determines the particle number distribution. To be precise, \eqref{qfree} is a ``gauge invariant'' quasi-free state. More general quasi-free states, which do not necessarily commute with particle number are defined in \cite{BraRob2}, and will not be studied in this paper. Quasi-free states are plausible models for non-interacting particle beams, because they give the same results as Boltzmannian independence in the weak beam limit. They also describe naturally the production of a beam. Consider an oven, modeled  as an ideal gas with one-particle Hamiltonian $H$, at temperature $T=1/(k\beta)$ and chemical potential $\mu$. Then we have the grand canonical quasi-free state with $\sigma=\exp(-\beta(H-\mu\idty))$. We then get a beam by letting some particles escape through a hole, and we can also add (possibly time-dependent) one-particle potentials, collimating filters and the like. The important point is that as long as we only apply one-particle operations, i.e., unitary operators of the form $\Fock(U)$, the quasi-free character of the initial state will be preserved.

In the context of quantum optics this type of photon beam is usually called thermal light (see e.g. \cite{Milburn}), the state appearing as a special case of ``chaotic state'' \cite{Glauberbook}. The latter is defined in terms of the occupation number states $|\{ n_k\}\rangle$ as
\begin{equation}\label{chaotic}
\rho = \sum_{\{n_k\}} \prod_k \frac{\alpha_k^{n_k}}{(1+\alpha_k)^{1+n_k}} |\{ n_k\}\rangle\langle\{ n_k\}|,
\end{equation}
where $k$ indexes the modes, and the $\alpha_k>0$ are parameters satisfying $\sum_k \alpha_k<\infty$, each $\alpha_k$ coinciding with the expectation value of the mode $k$ occupation number. In fact, any quasi-free state of the form \eqref{qfree} for $s=1$ can be written as \eqref{chaotic} by choosing the modes according to an eigenbasis of the positive trace class operator $\sigma$; the parameters $\alpha_k$ are then the eigenvalues of $\hatsigma= \sigma / (\idty-\sigma)$.

\subsection{Characteristic functions}
A crucial tool in the following is a formula for the denominator in \eqref{qfree}. When $A$ is trace class (i.e., $\norm A_1=\tr\sqrt{A\adj A}<\infty$), and, in the Bose case $\norm A<1$, then $\Fock_s(A)$ is also trace class and
\begin{equation}\label{traceFock}
    \tr\Fock_s(A)=\det(\idty-sA)^{-s}.
\end{equation}
For the theory of such infinite dimensional determinants we refer to \cite{SimonTrace}. Using this formula, we get a simple expression for the characteristic function of a counting measurement:
\begin{eqnarray}\label{computecf}
C(f) &=& \frac{\tr{\Fock (\sigma)\Fock\left(\mbr F{e^{if}}\right)}}{\tr{\Fock (\sigma)}}=
\frac{\tr{\Fock\left(\sigma \mbr F{e^{if}}\right)}}{\tr{\Fock(\sigma)}}\nonumber\\
&=&\frac{\det\left(\idty-s\sigma \mbr F{e^{if}}\right)^{-s}}{\det(\idty-s\sigma)^{-s}}
\nonumber\\
&=&\det\left(\idty-s(\idty-s\sigma)^{-s}\sigma \left(\mbr F{e^{if}}-\idty\right)
\right)^{-s} \nonumber
\end{eqnarray}
To summarize:
\begin{eqnarray}\label{ar3}
C(f) &=&\det\left(\idty-s\hatsigma \mbr F{e^{if}-1}\right)^{-s},
\\\mbox{with}\qquad\label{hatsigmadef}
    \hatsigma&=&\frac\sigma{\idty-s\sigma},
\end{eqnarray}
where we took the liberty to write a fraction because numerator and denominator commute, and the expression can be evaluated in the functional calculus. It is useful to note the bounds on the operators $\sigma,\hatsigma$ in the Bose and Fermi case: Clearly both operators must be positive and have finite trace. In the Bose case we need in addition that $\sigma\leq(1-\varepsilon)\idty$ for some $\varepsilon>0$, which is equivalent to saying that $\hatsigma$ is bounded. In the Fermi case it is the other way around: $\sigma$ can be any bounded operator, which implies that $\hatsigma$ is strictly less than the identity. The formula \eqref{ar3} contains the complete counting statistics for the counting observable (compare also \cite{klich}).

\subsection{Factorial moments}

From \eqref{ar3} and \eqref{factcfcf} we get the factorial moment generating function
\begin{equation}\label{factcfqf}
    \widehat{C}(f) =  \det\left(\idty-s\hatsigma \mbr F{f}\right)^{-s}=\tr\Fock_s(\hatsigma \mbr F{f})
\end{equation}
Comparing this with \eqref{factopcf} we see that
the $k$-particle reduced density operators of the quasi-free state are
   $\hat\rho_k=P_s\hatsigma\tot kP_s$,
so by \eqref{mhatgeneral}, the factorial moments are given by
\begin{equation}\label{factmomqf}
 \factmom_k(dx_1\cdots dx_k)=k!\tr P_s\hatsigma\tot kP_s F(dx_1)\otimes\cdots F(dx_k).
 \end{equation}

The first moment is simply $\factmom_1(dx)=\tr \hatsigma F(dx)$; for the second moment, the expression can be further reduced, so that traces have only to be taken in the one particle space. To this end we write the (anti-)symmetrization projection $P_s=(\idty+s\flip)/2$, where $\flip$ is the unitary transposition operator, and use $\tr(\flip A\otimes B)=\tr(AB)$. Then
\begin{eqnarray}\label{factmomqf2}
   \factmom_2(dx\,dy)&=&\factmom_1(dx)\factmom_1(dy)
       \nonumber\\ &&\quad
       + s\, \tr( \hatsigma F(dx)\hatsigma F(dy)).
\end{eqnarray}
Now the trace on the right hand side is a positive measure on $X\times X$ and is also positive definite in the sense that it gives positive expectation to functions of the form $\overline{f(x)}f(y)$. This shows that for Bosons we always have the bunching effect $g^{(2)}\geq1$ and the antibunching effect $g^{(2)}\leq1$ for Fermions. (Recall the definition \eqref{g2} of the correlation function $g^{(2)}$). Clearly, there are interesting cases of photon antibunching, but these require artfully correlated, not quasi-free sources.

For later use we note the $k^{\rm th}$ order generalization of \eqref{factmomqf2}. The expression for $\tr( V_\pi A_1\otimes\cdot A_k)$, for a permutation operator $V_\pi$ is based on the cycle decomposition of the permutation $\pi$, say $\pi=(i_1,\ldots,i_r)(j_1,\ldots,j_s)\cdots$, and gives the product of the traces $\tr(A_{i_1}\cdots A_{i_r})\tr(A_{j_1}\cdots A_{j_s})\cdots$. It is convenient to introduce the measures
\begin{equation}\label{mu-ell}
    \mu_\ell(dx_1\cdots dx_\ell)=\tr\left(\prod_{\alpha=1}^k \hatsigma F(dx_\alpha)\right),
\end{equation}
so that $\mu_1(dx)=\factmom_1(dx)$, and \eqref{factmomqf2} reads $\factmom_1(dx\,dy)=\mu_1(dx)\mu_2(dx)+s\mu_2(dx\,dy)$. Then, for example, for $k=3$, we find
\begin{eqnarray}\label{factmomqf3}
   \factmom_3(dx\,dy\,dz)&=&\mu_1(dx)\mu_1(dy)\mu_1(dz)
       \nonumber\\ &&\quad
       + s\, \bigl(\mu_2(dx\, dy)\mu_1(dz) +\mbox{cyclic}\bigr)
       \nonumber\\ &&\quad
       + 2\Re\mu_3(dx\,dy\,dz).
\end{eqnarray}
For general $k$ we get similar expansions into products of measures, the combinatorics of which requires some representation theory of the permutation group, which we will not expound here.

\subsection{The localization Lemma}\label{sec:localization}
The main aim of our paper is to establish {\it local} counting statistics even in situations where the global particle count is infinite, as will be the case for any stationary beam, or translationally invariant gas. The idea is to use the formula \eqref{ar3} for the characteristic function even in situations, where the operator $\hatsigma$ has infinite trace, but the product $\hatsigma F$ is sufficiently well behaved so the formula makes sense as written. Actually, even more general situations can be covered, if we replace $\hatsigma F$ by $\rtsigma F\rtsigma$. That is we start from the characteristic function
\begin{equation}\label{cfroot}
    C(f) =\det\left(\idty-s\rtsigma\mbr F{e^{if}-1}\rtsigma\right)^{-s}.
\end{equation}
This is the same as \eqref{ar3} when $\hatsigma$ has finite trace. Indeed, we have the identity \cite{ReeSim4det} $\det(\idty+AB)=\det(\idty+BA)$, whenever both $AB$ and $BA$ are both trace class. In the case at hand this is applied to the two Hilbert-Schmidt operators $A=\rtsigma$ and $B=\mbr F{e^{if}-1}\rtsigma$. The form given in the following Lemma is even slightly more general.

\begin{lem}\label{loclemma} Let $F$ be a measure on some set $X$, whose values are positive operators on a Hilbert space $\HH_1$, with $F(X)\leq\idty$, and consider a subset $X_0\subset X$. Let $\HH_2$ be another Hilbert space and $W:\HH_2\to\HH_1$ a bounded operator such that
\begin{equation}\label{WsWbound}
    \tr W^*F(X_0)W<\infty,
\end{equation}
and in the Fermi case ($s=-1$) also $\norm W<1$. Then the formula
\begin{equation}\label{cfqfW}
    C(f)=\det\left(\idty-sW^* \mbr F{e^{if}-1}W\right)^{-s},
\end{equation}
for all $f$ vanishing outside $X_0$ defines the characteristic function of a point process in $X_0$.
\end{lem}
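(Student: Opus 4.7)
My plan is to recognise $C(f)$ as the restriction, to functions supported in $X_0$, of the characteristic function of a bona fide quasi-free point process living in a larger Hilbert space. The key is to turn the POVM $F$ into a PVM by Naimark dilation, and then to use the hypothesis \eqref{WsWbound} to localise onto a subspace where the relevant operator is honestly trace class.

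First, I would apply Naimark's theorem to obtain an isometry $V:\HH_1\to\widetilde\HH$ and a projection-valued measure $\widetilde F$ on $X$ with $\mbr F h=V^*\mbr{\widetilde F}hV$ for all bounded measurable $h$. Set $P:=\widetilde F(X_0)$ and $T:=PVW$. Since $\widetilde F$ is projection-valued, every $f$ vanishing outside $X_0$ satisfies $\mbr{\widetilde F}{e^{if}-1}=P\mbr{\widetilde F}{e^{if}-1}P$, so that
\begin{equation*}
W^*\mbr F{e^{if}-1}W=T^*\mbr{\widetilde F}{e^{if}-1}T.
\end{equation*}
The hypothesis \eqref{WsWbound} rewrites as $\tr T^*T=\tr W^*F(X_0)W<\infty$, so $T$ is Hilbert--Schmidt and $\hatsigma_0:=TT^*$ is positive trace class on $\widetilde\HH$; in the Fermi case $\norm{\hatsigma_0}\leq\norm W^2<1$. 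Therefore $\sigma_0:=\hatsigma_0/(\idty+s\hatsigma_0)$ sits in the parameter range discussed after \eqref{hatsigmadef}, and $\rho_0:=\Fock_s(\sigma_0)/\tr\Fock_s(\sigma_0)$ is a legitimate quasi-free density operator on $\Fock_s(\widetilde\HH)$.

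Second-quantising $\widetilde F$ in the state $\rho_0$ produces a genuine point process on $X$ whose characteristic function is, by \eqref{ar3},
\begin{equation*}
\widetilde C(f)=\det\bigl(\idty-s\hatsigma_0\mbr{\widetilde F}{e^{if}-1}\bigr)^{-s}.
\end{equation*}
For $f$ supported in $X_0$, the Fredholm identity $\det(\idty+AB)=\det(\idty+BA)$ applied with $A=T$ and $B=T^*\mbr{\widetilde F}{e^{if}-1}$ (both products are trace class because $T$ is Hilbert--Schmidt and $\norm{\mbr{\widetilde F}{e^{if}-1}}\leq 2$) gives
\begin{equation*}
\widetilde C(f)=\det\bigl(\idty-sT^*\mbr{\widetilde F}{e^{if}-1}T\bigr)^{-s}=\det\bigl(\idty-sW^*\mbr F{e^{if}-1}W\bigr)^{-s}=C(f).
\end{equation*}
Hence $C$ coincides with the characteristic function of the marginal of this process on $X_0$, which is automatically a point process on $X_0$.

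The only real obstacle I anticipate is bookkeeping: one must verify trace-class membership at each cyclic application of the Fredholm identity. This reduces uniformly in $f$ to the single trace-class input $\tr W^*F(X_0)W<\infty$ combined with the $L^\infty$-bound $\norm{\mbr F{e^{if}-1}}\leq 2$. A minor technical caveat is the measurability condition on $X_0$ required by Naimark dilation, which can be taken from standard references without affecting the rest of the argument.
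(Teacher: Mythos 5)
Your proposal is correct and follows essentially the same route as the paper's proof: Naimark-dilate $F$ to a PVM $\widetilde F$, compress with the projection $\widetilde F(X_0)$ to form the Hilbert--Schmidt operator $T=\widetilde F(X_0)VW$ (the paper's $\widetilde W$), recognise $TT^*$ as an admissible $\hatsigma$ for a bona fide quasi-free state, and transfer the determinant back via $\det(\idty+AB)=\det(\idty+BA)$. The only (harmless) difference is that you spell out the intermediate $\sigma_0=\hatsigma_0/(\idty+s\hatsigma_0)$ and the norm bound $\norm{TT^*}\leq\norm W^2$ explicitly, which the paper leaves implicit.
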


\begin{proof}
Consider the Naimark dilation $F=V^*\widetilde FV$. Then for $f$ with support in $X_0$ we can write
\begin{eqnarray}\label{ww}
    W^* \mbr FfW&=& W^*V^*\mbr{\widetilde F}f VW \nonumber\\
        &=& W^*V^*\widetilde F(X_0)\mbr{\widetilde F}f \widetilde F(X_0)VW \nonumber\\
        &=&\widetilde W^*\mbr{\widetilde F}f \widetilde W, \nonumber
\end{eqnarray}
where $\widetilde W=\widetilde F(X_0)VW$, and at the second equality we used the projection valuedness of $\widetilde F$. Now by assumption \eqref{WsWbound} the operator $\widetilde W^*\widetilde W$ has finite trace, i.e., $\widetilde W$ is a Hilbert-Schmidt operator.
Moreover (relevant only for $s=-1$) $\norm{W}<1$ implies $\norm{\widetilde W}<1$, because $\widetilde F(X_0)$ is a projection, and the dilation operator
$V$ satisfies $V^*V=F(X)\leq\idty$. Hence $\widetilde W\widetilde W^*$ satisfies all conditions required of an operator $\hatsigma$ to define the counting statistics of a bona fide quasi-free state, with respect to a second quantized observable. The associated characteristic function \eqref{ar3} is
\begin{equation}\label{cfqfWtilde}
    C(f)=\det\left(\idty-s\widetilde W\widetilde W^* \mbr{\widetilde F}{e^{if}-1}\right)^{-s},
\end{equation}
which has the form stated in the lemma by the same argument that gave the equality of \eqref{ar3} and \eqref{cfroot} at the beginning if this section.
\end{proof}

Since we can take $W=\rtsigma$ in the Lemma, we find $\tr\rtsigma F(X_0)\rtsigma<\infty$ as a sufficient condition to apply \eqref{cfroot}. The similar looking condition $\norm{\hatsigma F(X_0)}_1<\infty$, which is suggested by the characteristic function \eqref{ar3}, is actually stronger. This is implied by the estimate $\tr \sqrt AB\sqrt A\leq\norm{AB}_1$, which holds for arbitrary positive operators $A,B$. (For a proof note that the trace norm is the sum of the singular values, which dominates the sum of the absolute values of the eigenvalues \cite[Thm. 1.15]{SimonTrace}, and that $AB$ and $\sqrt AB\sqrt A$ have the same nonzero eigenvalues.)

As a byproduct we can now approximate the counting statistics of a stationary state by the counting statistics of finite-beam ones:
\begin{lem}\label{loclimitlemma}Suppose that $W_0:\mathcal{H}_2\to\mathcal{H}_1$ satisfies the conditions of Lemma \ref{loclemma} for a given measure $F$ and a set $X_0\subset X$. Let $W_\infty,W_\alpha:\mathcal{H}_2\to\mathcal{H}_1$ be bounded operators, such that $W_\alpha^*\to W_\infty^*$ strongly, and
\begin{equation}\label{Wbound}
W_\alpha W_\alpha^*\leq W_0 W_0^*.
\end{equation}
Then $W_\infty$ and each $W_\alpha$ satisfy the conditions of Lemma \ref{loclemma}, and for the associated characteristic functions $C_\alpha, C_\infty$ holds
$$
C_\infty(f) = \lim_{\alpha} C_\alpha(f)
$$
uniformly for $f$ with support in $X_0$.
\end{lem}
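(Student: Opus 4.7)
The plan is to apply the Naimark-dilation machinery from the proof of Lemma~\ref{loclemma} to each $W_\alpha$ and to $W_\infty$ in turn, and then to convert the assumed strong convergence into trace-norm convergence of the argument of the Fredholm determinant, uniformly in $f$.

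First I would verify that the hypotheses of Lemma~\ref{loclemma} propagate. From $W_\alpha W_\alpha^*\leq W_0W_0^*$ and cyclicity of the trace one gets $\tr W_\alpha^*F(X_0)W_\alpha=\tr F(X_0)^{1/2}W_\alpha W_\alpha^*F(X_0)^{1/2}\leq\tr W_0^*F(X_0)W_0<\infty$, and $\norm{W_\alpha}^2=\norm{W_\alpha W_\alpha^*}\leq\norm{W_0}^2$. Strong convergence of the adjoints then gives $\langle\xi,W_\infty W_\infty^*\xi\rangle=\lim_\alpha\norm{W_\alpha^*\xi}^2\leq\langle\xi,W_0W_0^*\xi\rangle$, so $W_\infty W_\infty^*\leq W_0W_0^*$ and the same bounds apply to $W_\infty$; in particular $\norm{W_\infty}<1$ in the Fermi case.

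Next, fix a Naimark dilation $F=V^*\widetilde FV$ and set $\widetilde W_\alpha=\widetilde F(X_0)VW_\alpha$, so that, as in the proof of Lemma~\ref{loclemma},
\begin{equation*}
C_\alpha(f)=\det\bigl(\idty-s\widetilde W_\alpha^*\widetilde M(f)\widetilde W_\alpha\bigr)^{-s},
\end{equation*}
with $\widetilde M(f)=\mbr{\widetilde F}{e^{if}-1}$, $\norm{\widetilde M(f)}\leq2$, and similarly for $\alpha=\infty$. The key technical step is to upgrade the given strong convergence of $W_\alpha^*$ to Hilbert--Schmidt convergence $\norm{\widetilde W_\alpha^*-\widetilde W_\infty^*}_2\to0$. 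Writing $\widetilde W_\alpha^*=W_\alpha^*V^*\widetilde F(X_0)$, each summand of $\sum_k\norm{(\widetilde W_\alpha^*-\widetilde W_\infty^*)e_k}^2$ (with $\{e_k\}$ an ONB of the dilation space) tends to zero by strong convergence and is dominated by $4\norm{\widetilde W_0^*e_k}^2$ on account of $W_\alpha W_\alpha^*\leq W_0W_0^*$ and the corresponding inequality for $W_\infty$; since $\sum_k\norm{\widetilde W_0^*e_k}^2=\tr W_0^*F(X_0)W_0<\infty$, dominated convergence yields the claim. This is the main obstacle: strong convergence together with a uniform Hilbert--Schmidt bound does not by itself force Hilbert--Schmidt convergence, and it is precisely the domination provided by $W_0$ that supplies the integrable majorant.

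Finally, I would deduce trace-norm convergence of the determinant arguments, uniformly in $f$ supported in $X_0$, by splitting
\begin{equation*}
\widetilde W_\alpha^*\widetilde M\widetilde W_\alpha-\widetilde W_\infty^*\widetilde M\widetilde W_\infty=(\widetilde W_\alpha^*-\widetilde W_\infty^*)\widetilde M\widetilde W_\alpha+\widetilde W_\infty^*\widetilde M(\widetilde W_\alpha-\widetilde W_\infty)
\end{equation*}
and applying $\norm{ABC}_1\leq\norm{A}_2\norm{B}\norm{C}_2$ to each piece, which yields a bound of order $\norm{\widetilde W_0}_2\norm{\widetilde W_\alpha^*-\widetilde W_\infty^*}_2$ independent of $f$. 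The continuity of the Fredholm determinant in trace norm then gives the asserted uniform convergence $C_\alpha(f)\to C_\infty(f)$; the Bose case causes no extra trouble, since the a priori bound $|C_\alpha(f)|\leq1$ keeps $\det(\idty-s\widetilde W_\alpha^*\widetilde M\widetilde W_\alpha)$ bounded away from zero in modulus.
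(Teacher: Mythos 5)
Your proof is correct and follows essentially the same route as the paper's: pass to the Naimark dilation, use the majorization by $W_0$ together with strong convergence of the adjoints to upgrade to Hilbert--Schmidt convergence of the $\widetilde W_\alpha^*$, deduce trace-norm convergence of the determinant's argument uniformly in $f$, and conclude by continuity of the Fredholm determinant. You merely spell out two details the paper leaves implicit (the dominated-convergence step behind the Hilbert--Schmidt convergence, and the lower bound $|\det|\geq 1$ needed to invert the determinant in the Bose case), and you work with $\widetilde W^*\,\widetilde M\,\widetilde W$ instead of $\widetilde W\widetilde W^*\widetilde M$, which is equivalent via $\det(\idty+AB)=\det(\idty+BA)$.
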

\begin{proof}
Defining $\widetilde W_\alpha$ for $W_\alpha$ as in Lemma \ref{loclemma}, the assumption \eqref{Wbound} gives $\widetilde W_\alpha \widetilde W_\alpha^*\leq \widetilde W_0 \widetilde W_0^*$, which implies that the conditions of Lemma \ref{loclemma} are valid also for each $W_\alpha$ and $W_\infty$, and that $\|\widetilde W_\alpha^*\psi\|^2\leq \|\widetilde W_0^*\psi\|^2$ for $\psi\in \mathcal{H}$ where now $\widetilde W_0^*$ is Hilbert-Schmidt. Together with the strong convergence $\widetilde W_\alpha^*\to \widetilde W_\infty^*$, this implies convergence in the Hilbert-Schmidt norm. Hence, $\widetilde W_\alpha \widetilde W_\alpha^*\to \widetilde W_\infty\widetilde W_\infty^*$ in the trace norm. Using \eqref{cfqfWtilde}, we get the uniform convergence of the $C_\alpha$ from the estimate $|\det(\idty+A)-\det(\idty+B)|\leq \|A-B\|_{1}e^{\|A\|_{1}+\|B\|_{1}+1}$ \cite{ReeSim4det}.
\end{proof}

Now, in order to approximate a stationary beam with non-trace class $\hatsigma$, by a sequence of finite beams with trace class $\hatsigma_\alpha$, it is sufficient to have the weak operator convergence $\rtsigma_\alpha\to\rtsigma$, and majorization $\hatsigma_\alpha\leq \hatsigma_0$ by some $\hatsigma_0$ with $\tr\sqrt{\hatsigma_0} F(X_0)\sqrt{\hatsigma_0}<\infty$. This will give uniform convergence of characteristic functions for the local counting statistics.

\subsection{``Small $s$'' and parastatistics}
We end this section with a remark on parastatistics and weak beams. If we write \eqref{ar3} as
\begin{equation}\label{ar4}
C_s(f) =\exp\tr\left(\frac{-1}s\log(\idty-s\hatsigma \mbr F{e^{if}-1})\right),
\end{equation}
the formula gives corrected Boltzmann statistics \eqref{cfBoltz} for $s=0$, and parafermi (resp.~parabose) statistics of order $p$ for $s=-1/p$ (resp. $s=1/p$).
From \eqref{ar4} we get a rather uniform notion of weak beams: Whenever $\sigma$ is small (or in the parastatistics case: when $s$ is small), we nearly get Poisson statistics. In quantitative terms, the operator norm $\norm\hatsigma_\infty$ (the largest eigenvalue) measures well the maximal effect of statistics, in some sense the maximal phase space density:
\begin{eqnarray}\label{weakbeam}
    \abs{\log C_s(f)-\log C_0(f)}&\leq& \norm\hatsigma_1\,\beta(\abs s\norm\hatsigma_\infty)
        \nonumber\\ \mbox{with }\
    \beta(h)&=&-1-\frac{\log(1-h)}h\approx \frac h2.
\end{eqnarray}

\subsection{For comparison: coherent states}

Quasi-free states (which are associated to e.g. thermal light) are very different from coherent states, which are commonly used in describing laser beams. In order to make the distinction clear, we derive here the characteristic function and factorial moments for the latter case. The (Bose) \emph{coherent states} are given by the non-normalized vectors
\begin{equation}\label{e2thephi}
   e^\phi = \bigoplus_{N=0}^\infty \frac{1}{\sqrt{N!}}\phi\tot N,
\end{equation}
in the Bose Fock space $\Fock_+(\mathcal H)$. The characteristic function \eqref{charfunc} is now
\begin{eqnarray}\label{cfcoherent}
   C(f) &=& \left\langle e^\phi \mid \Fock_+(\mbr F{e^{if}})e^\phi\right\rangle / \| e^\phi\|^2
              \nonumber\\
        &=& \exp\braket \phi {F(e^{if}-1)\phi}.
\end{eqnarray}
This is precisely the characteristic function \eqref{poissoncf} of a Poisson random field with intensity measure
\begin{equation}\label{coherentPoisson}
    \mu(dx)=\braket\phi {F(dx)\phi}.
\end{equation}
It is remarkable that this holds for {\it any} second quantized observable.
The factorial moments are now simply products of $\mu$:
\begin{equation}\label{factmomcoh}
\factmom_k(dx_1\cdots dx_k) = \mu(dx_1)\cdots\mu(dx_k).
\end{equation}
Comparing with \eqref{factmomqf2}, we see that a quasi-free state with the same first moment, i.e., with
$\hatsigma= |\phi\rangle \langle \phi|$, has larger second moment. In particular, thermal light has more variance than coherent light of same intensity.

From \eqref{factmomcoh} we also immediately see that for any coherent state and any counting observable, all the correlation functions \eqref{g2} are constant,
$g^{(n)}(x_1,\ldots, x_k) = 1$, as expected \cite{Glauberbook,Milburn}.

\section{Stationary limit of a particle source}\label{sec:source}\noindent
The kind of limit we have to take here is clear from the case of the Boltzmann statistical model \eqref{cfBoltz}. We assumed there that $\mu$ is a finite measure, so the total number of particles had finite expectation. But we can also take $\mu(dt)=\mu_0dt$ as a multiple of Lebesgue measure, where $\mu_0$ is the emission rate, i.e.,
\begin{equation}\label{sigmaintegral}
    \sigma=\mu_0\int\!\! dt\ \sigma(t),
\end{equation}
provided the integral is a bounded operator and, for finite intervals $S$, $\tr\sqrt{\sigma} F(S)\sqrt{\sigma}$ has finite trace. One can take this as a motivation for looking integrated trace class operators \eqref{sigmaintegral} also in the case of Bose/Fermi statistics.
However,  it is physically more realistic to have a description of beam generation which is consistent with statistics from the outset.

For building a stationary source model it is best to include the particle generation in the dynamics. In this way one can consider sources operating continuously for an arbitrarily long time. The intuitive idea is that after being activated at time $t=0$, the source creates particles with fixed initial wave function $\phi$, one after another, each particle subsequently evolving according to some single particle Hamiltonian $H$ with direct integral decomposition as discussed in \secref{timeobs}. Formally, this is expressed (see, e.g., the review \cite{Alicki}, and \cite{Spohn}) by evolving the many particle state $\rho_t$ according to the master equation
\begin{equation}\label{mastereq}
    \frac{d}{dt}\rho_t=-i[d\Gamma(H),\rho_t]+\lambda(2a_\phi\rho_t a_\phi\adj -a_\phi a_\phi\adj \rho_t-\rho_t a_\phi a_\phi\adj ),
\end{equation}
with the initial condition that $\rho_0$ is the vacuum state. Here and in the rest of this section we consider only the Bosonic case. Then $\lambda>0$ quantifies the strength of the source, and $d\Gamma(H)$ is the many particle Hamiltonian corresponding to $H$. The time evolution $t\mapsto \rho_t$ is a quasi-free semigroup \cite{Alicki}, so the Fock space state $\rho_t$ is quasi-free for each $t\geq 0$. This reduces the dynamics to a one-particle problem, which has the solution
\begin{equation}\label{solvesigmat}
    \hatsigma_\phi(t) = 2\lambda \int_0^t\!\!ds\  e^{sT_\phi}\kettbra\phi \left(e^{sT_\phi}\right)^*
\end{equation}
where $T_\phi= -iH+ \lambda\kettbra\phi$ is the generator of a strongly continuous semigroup.

The integral $\hatsigma_\phi(\infty)$ is the analogue of \eqref{sigmaintegral}. To compute it, we can formally solve the function $\beta_\psi(t) = \Theta(t)\langle e^{tT_\phi}\phi |\psi\rangle$, in terms of $\gamma_\psi(t)=\Theta(t)\langle e^{-itH} \phi |\psi\rangle$, where $\Theta$ is the Heaviside step function. In fact, we get $\check\beta_\psi = S_\phi \check{\gamma}_\psi$, where $\check{h}$ denotes the inverse of the Fourier transform given in \eqref{fourierf}, and
\begin{equation}\label{statcorr}
S_\phi(E) =(1-\lambda \check\gamma_{\phi}(E))^{-1}.
\end{equation}
This gives $\langle \psi | \hatsigma_\phi(\infty)\psi\rangle = \lambda\pi^{-1}\int_{-\infty}^\infty |S_\phi(E)\check\gamma_{\psi}(E)|^2\, dE$, corresponding to a quasi-free state which is invariant with respect to the evolution \eqref{mastereq} but not with respect to the free Hamiltonian $H$.

In order to get a state which is stationary for the free evolution, we now take another limit $\hatsigma=\lim_{s\to+\infty}e^{iHs}\hatsigma_\phi(\infty)e^{-iHs}$. In the spirit of scattering theory, this amounts to translating the state back in time with the free evolution, after having let it evolve a long time according to the particle generating semigroup evolution.
Since $\gamma_{e^{-iHs} \psi}(t)= \gamma_\psi(t-s)$, we have $e^{isE}\check{\gamma}_{e^{-iHs}\psi}(E)\to2\pi \langle \phi(E)|\psi(E)\rangle$ as $s\to +\infty$, where $\phi(E)$ is the wave function $\phi$ in the $H$ (energy) representation \eqref{dirintE}. Here we have used the fact that
\begin{equation}\label{eitHrep}
\langle e^{-itH}\phi |\psi\rangle = \int e^{itE} \langle \phi(E)|\psi(E)\rangle_{\HH_E}.
\end{equation}
Thus the final stationary limit is given by
\begin{equation}\label{hatsiglimit}
    \hatsigma=4\pi \lambda\dirintE |S_\phi(E)|^2\kettbra{\phi_E}.
\end{equation}

In the 1D case with the free Hamiltonian, one can alternatively take a limit of moving the source to $-\infty$ in space, which results in a similar expression but with the projection onto positive momenta. Even in this case the denominator, which reflects the phase space density at the source, still depends on the negative momentum components of $\phi$.

There are, of course, some assumptions needed to make the above derivation valid. Physically, we expect \cite{Spohn} that the free evolution $H$ should be fast enough compared to the strength of the source so that the particles do not accumulate, or even condense, near the source, but move away as new ones are created. Mathematically,
the relevant assumptions can be expressed as follows: (i) The wave function $\phi$ is bounded in the energy representation, and (ii) the operator of multiplication by $S_\phi(E)$ defines a bounded operator which keeps the Hardy class $H^{2-}$ invariant. The assumption (i) ensures that $\|\gamma_{\psi}\|_2/\|\psi\|$ is uniformly bounded due to \eqref{eitHrep}; in particular, the $L^2$ Fourier transform $\check\gamma_{\phi}$ belongs to the Hardy class $H^{2-}$, and so can be extended to an analytic function in the open lower half plane. Then (ii) implies that $S_\phi \check\gamma_\psi$ is square integrable, and has support on $[0,\infty)$; hence $S_\phi$ is well defined, the formal relation $\check\beta_\psi = S_\phi \check\gamma_\psi$ makes sense, and \eqref{hatsiglimit} is bounded, the limits existing in the weak operator topology.

The assumption (ii) can be replaced with stronger but more easily verifiable versions: for instance, if $\gamma_{\phi}$ is (absolutely) integrable, and $\check\gamma_{\phi}(E) \neq \lambda^{-1}$ for all $E$ in the closed lower half plane, then (ii) holds. Even stronger condition \cite{Spohn} is $\lambda \int_0^\infty |\gamma_{\phi}(t)| \, dt<1$.

The assumption (ii) implies, in particular, that $\int |S_\phi(E)|^2\|\phi_E\|^2\, dE<\infty$. In the next section we will see in a more general context that this condition ensures $\tr\sqrt{\hatsigma}F[f] \sqrt{\hatsigma}<\infty$ for any arrival time observable $F$ as in \secref{timeobs}, and any $f$ compactly supported in the time direction. By Lemma \ref{loclemma}, the local counting statistics of $\Gamma_+F$ is therefore well-defined for the stationary state \eqref{hatsiglimit}, and can be obtained from \eqref{cfroot}.

In order to complete the discussion on this stationary limit, we have to check that the counting statistics for the limit state \eqref{hatsiglimit} can be approximated by measuring the actual finite beam emitted by the source. It is clear by construction that we can write $\hatsigma_{s,t}=e^{iHs}\hatsigma_\phi(t)e^{-iHs}=W_{s,t}^*W_{s,t}$, where $(W_{s,t}\psi)(E) = \sqrt{\lambda/\pi} e^{isE}\int_0^t e^{-iEt'} \beta_{\phi,e^{-isH}\psi}(t')\, dt'$. Here $W_{s,t}$ converge strongly when we take first $t\rightarrow \infty$ and then $s\rightarrow\infty$. Moreover, $\hatsigma_{s,t}\leq \|S_\phi\|^2\hatsigma_0$, where $\hatsigma_0$ is \eqref{hatsiglimit} without $|S_\phi(E)|^2$. Hence, it follows from Lemma \ref{loclimitlemma} that
\begin{equation}\label{spohncharlimit}
\lim_{s\rightarrow\infty}\lim_{t\rightarrow\infty} C_{s,t}(f) = \det(\idty - \rtsigma F[e^{if}-1]\rtsigma)^{-1},
\end{equation}
where $C_{s,t}$ is the characteristic function \eqref{cfroot} corresponding to the operator $\hatsigma_{s,t}$.

\section{Beams and rates}\label{sec:beams}
\subsection{Beams}
The following general scheme emerges from the above. We consider systems with Hamiltonian $H$, and decompose the Hilbert space into a direct integral $\HH=\dirintE\HH_E$ over the spectrum of $H$. Stationary beams are described by an extension of quasi-free states, given in terms of a one-particle operator
\begin{equation}\label{sigmageneral}
    \hatsigma=\dirintE\ \hatsigma(E),
   \end{equation}
commuting with the Hamiltonian. Here $\hatsigma(E)$ is a positive trace class operator in the multiplicity space $\HH_E$ at $E$, which depends on the details of the source. The basic normalization condition for these operators is that
\begin{equation}\label{sigmarate}
    \gamma=\frac1{2\pi}\int\!\!dE\ \tr\hatsigma(E)<\infty.
\end{equation}

We emphasize that no such operator is trace class. Indeed, a general trace class operator $T$ would have an integral kernel $T(E,E'):\HH_{E'}\to\HH_{E}$ so that
$(T\psi)(E)=\int\!\!dE'\ T(E,E')\psi(E')$. The trace of such an operator is $\tr T=\int\!\!dE\ \tr T(E,E)$. In contrast, the operator \eqref{sigmageneral} has the formal integral kernel $T(E,E')=\hatsigma(E)\delta(E-E')$, which is singular on the diagonal. Such operators do arise from integration of trace class operators over
time in the sense of \eqref{sigmaintegral}: The time evolved operator $U_t^*TU_t$ has integral kernel $T(E,E')\exp(it(E-E'))$. Integrating this with respect to time we get the kernel $2\pi\delta(E-E')T(E,E)$. Hence the trace class condition for $T$ turns into \eqref{sigmarate} for the integral $\int\!dt\, U_t^*TU_t$.

The direct integral form \eqref{sigmageneral} or, in other words, the elimination of off-energy-diagonal terms in the kernel for $\hatsigma$ is, of course, just the consequence of stationarity $[\hatsigma,H]=0$, and leads to a major simplification in the computation of expectation values and rates.

\subsection{Rates}
We now want to combine the stationary sources given by $\hatsigma$ of the form \eqref{sigmageneral}, with a general counting observable $\Fock_sF$, i.e., the second quantization of a general arrival time observable $F$, as discussed in \secref{timeobs}. The full \emph{local} counting statistics is then contained in the characteristic function $C(f)$ restricted to test functions $f$ which have compact support in the time direction. Technically, this approach is based on the discussion in \secref{localization}, which guarantees the existence of $C(f)$ in \eqref{cfroot} for any $f$ with $f(x,t) = 0$ for $t\notin [t_1,t_2]$,
once we have
\begin{equation}\label{ftrace}
{\rm tr} \sqrt{\hatsigma} F[\chi] \sqrt{\hatsigma}<\infty,
\end{equation}
where $\chi(x,t) = 1$ for $t\in [t_1,t_2]$ and $\chi(x,t) = 0$ otherwise.
One of the consequences of this approach is that we can always get a justification of the formula starting from finitely extended beams (trace class $\hatsigma$) and going to a stationary limit in \eqref{cfroot}. There are many ways to do such a limit, which corresponds to the many ways a beam which looks basically stationary during a fixed time interval could begin in the distant past and end in the far future. The formula \eqref{cfroot} thus captures the essence of what we mean by ``stationary beams''.

The rest of this subsection will be devoted to substantiating the above claim \eqref{ftrace} 
for $\hatsigma$ of the form \eqref{sigmageneral}, any time-covariant $F$, and any $t_1<t_2$.
We do this by showing the more general estimate
\begin{equation}\label{trnormrt}
    \norm{\rtsigma\mbr F{f}\rtsigma}_1\leq \|f\|_\infty \gamma \abs{t_2-t_1},
\end{equation}
for any bounded complex (measurable) test function $f$ such that $f(t,x)=0$ for $t\notin[t_1,t_2]$,
where $\gamma$ is the rate constant from \eqref{sigmarate}. First we apply the triangle inequality to a sum of positive operators, and use that on positive elements the trace norm is just the trace. That is for positive operators $F_\alpha$ and $f_\alpha\in\Cx$ we have
\begin{equation}
   \Bigl\Vert{\sum_\alpha f_\alpha F_\alpha}\Bigr\Vert_1
   \leq \max_\alpha\abs{f_\alpha}\sum_\alpha\tr F_\alpha
\end{equation}
Hence, for step functions $f= \sum_\alpha f_\alpha \chi_\alpha$, with $\chi = \sum_\alpha \chi_\alpha$, the left hand side of \eqref{trnormrt} is bounded by $\|f\|_\infty \tr\rtsigma\mbr F{\chi}\rtsigma$. To prove \eqref{trnormrt}, it is therefore sufficient to show that
\begin{equation}\label{trFest}
\tr\rtsigma\mbr F{\chi}\rtsigma\leq \gamma \abs{t_2-t_1}.
\end{equation}
We do this by expressing $F$ by its dilation $F=V^*\widetilde FV$. Since $\mbr{\widetilde F}{\chi}$ is a projection, the trace we need to compute is $\tr W^*W$ with
\begin{equation}\label{Ww}
    W=\mbr{\widetilde F}{\chi}V\rtsigma.
\end{equation}
Since $\tr W^*W=\tr WW^*$ for any operator (where both sides might still be infinite), we now compute $\tr WW^*=\mbr{\widetilde F}{\chi}V\hatsigma V^*\mbr{\widetilde F}{\chi}$. Note that
\begin{equation}\label{VsigVstar}
    V\hatsigma V^*=\dirintE V_E\hatsigma(E) V_E^*
\end{equation}
commutes with the energy. Therefore, in the time domain it acts as a convolution operator:
\begin{eqnarray}\label{St}
    (V\hatsigma V^*\Phi)(t)&=& \int\!ds\ S(t-s)\psi(s)
       \nonumber\\\mbox{with}\qquad
    S(t)&=&\frac1{2\pi}\int\! dE\ e^{itE}V_E\hatsigma(E) V_E^*.
\end{eqnarray}
Here the integral defining $S$ is convergent in trace norm by assumption \eqref{sigmarate}, and by the Riemann-Lebesgue Lemma $t\mapsto S(t)\in\BB(\KK)$ is a continuous function vanishing at infinity. Due to the continuity we can evaluate the trace as an integral on the diagonal of the kernel $K(t,s)=S(t-s)$ \cite[Thm. 3.9.]{SimonTrace}:
\begin{equation}
    \tr \mbr{\widetilde F}{\chi}V\hatsigma V^*\mbr{\widetilde F}{\chi}
      =\int_{t_1}^{t_2}\!\!\!dt\ \tr_\KK S(0).
\end{equation}
But
\begin{equation}\label{trS0}
    \tr_\KK S(0)=\frac1{2\pi}\int\!dE\ \tr\hatsigma(E) V_E^*V_E
               \leq\gamma.
\end{equation}
This completes the proof of the estimate \eqref{trnormrt}.

According to \secref{localization}, in order to approximate the counting statistics of a stationary state (described by $\hatsigma$ of the form \eqref{sigmageneral}) by the counting statistics of finite-beam ones (with trace class $\hatsigma_\alpha$), it is sufficient to find one majorizing $\hatsigma_0$ of the form \eqref{sigmageneral}, with $\hatsigma_\alpha\leq \hatsigma_0$, and have $\sqrt{\hatsigma_\alpha}\to \sqrt{\hatsigma}$ at least weakly.
This will give uniform convergence of characteristic functions for the local counting statistics.

\subsection{Moments}
The moments are best expressed in terms of the operator valued function $S(t)$, defined in \eqref{St}. This depends both on the source via $\hatsigma$ and on the observable chosen, via $V$. It also contains the required Fourier transformations, so all moments are immediately expressed in the time domain. The idea is to reduce the factorial moments to the measures from \eqref{mu-ell}, rewritten by using the dilation:
\begin{equation}
    \mu_\ell(dx_1\cdots dx_\ell)=\tr\left( V\hatsigma V^*\widetilde F(dx_1)\cdots V\hatsigma V^*\widetilde F(dx_\ell)\right).
    \nonumber
\end{equation}
The operator under the trace has integral kernel
\begin{eqnarray}
    K(t_0,t_\ell)=\int\!\! dt_1&\cdots&dt_{\ell-1} S(t_0-t_1)\widetilde F(dx_1)\times
                    \nonumber\\
        \times S(t_1-t_2)&\cdots& S(t_{\ell-1}-t_\ell)\widetilde F(dx_\ell)
                    \nonumber
\end{eqnarray}
Then the required trace is $\int\!dt\ \tr K(t,t)$, where the trace in the integrand is over $\KK$.
In all these expressions the argument of the measure $\widetilde F$ is still the combination of time and arrival location. But noting that with respect to time
$\widetilde F$ is just a multiplication operator, so for $f(t,x)=h(t)g(x)$ we have $(\mbr{\widetilde F}{hg}\Phi)(t)=h(t)\mbr Gg\Phi(t)$. Here and in the following $x$ stands only for the arrival location. If we now take $h$ and $g$ as the indicator functions of small sets $dt\subset\Rl$ and $dx\subset X$, we find that
\begin{eqnarray}\label{muellSt}
   &&\mu_\ell(dt_1\,dx_1,\cdots,dt_\ell\,dx_\ell)= dt_1\cdots dt_\ell \times\\
   &&\quad\times\tr\Bigl( S(t_\ell-t_1)G(dx_1)S(t_1-t_2)\cdots G(dx_\ell)\Bigr)
   \nonumber
\end{eqnarray}
For the first (factorial) moment we thus get
\begin{equation}\label{timemoment1}
   {\factmom_1(dt\,dx)}= dt\,\tr S(0) G(dx)
\end{equation}
Hence $S(0)$ serves as a ``density matrix'' for count rates. It is normalized to the total particle rate $\gamma$. For a discrete family of counters
with POVM elements $G_x\geq0$, $\sum_xG_x=\idty$ the arrival rate at counter $x$ is $\gamma_x=\tr S(0)G_x$.

The second moment has a density depending only on the time difference $\tau=t_2-t_1$. For discrete counters we also give the form of the normalized correlation function $g^{(2)}$:
\begin{eqnarray}\label{timemoment2}
   \factmom_1(dt_1\,dx_1,dt_2\,dx_2)&=&dt_1\,dt_2\,M_{t_1-t_2}(dx_1,dx_2)\\
   M_{\tau}(dx_1,dx_2)&=&(\tr S(0) G(dx_1))(\tr S(0) G(dx_2))\nonumber\\
                        &&\quad+s \tr S(\tau)^*G(dx_1)S(\tau)G(dx_2)\nonumber\\
   g^{(2)}_{xy}(\tau)&=& 1+\frac s{\gamma_x\gamma_y}\tr S(\tau)^*G_xS(\tau)G_y  \nonumber
\end{eqnarray}
Here we used the symmetry $S(-\tau)=S(\tau)^*$ to write the expression in a more obviously positive form.

It is not very enlightening to write down the higher moments. The third moment \eqref{factmomqf3} will contain contributions
$\tr S(t_3-t_1)G_xS(t_1-t_2)G_yS(t_2-t_3)G_z$.

\section{Examples}\label{sec:examples}
\subsection{Second order correlation function}
In the case of the free Hamiltonian in one dimension, we have $\mathcal{H}_E=\Cx^2$. Taking $V_E=\idty$ for $E\geq0$, the source \eqref{hatsiglimit}, and one detector $G$ corresponding to the detection of right going (positive momenta) particles, we get
\begin{align*}
\gamma &= (2\pi)^{-1}\chi(0), & g^{(2)}(\tau) &= 1+|\chi(\tau)|^2\chi(0)^{-2},
\end{align*}
where $\chi(\tau) = 4\pi\lambda\int_0^\infty dE\, e^{-i\tau E} |\phi_{E,+}|^2\,|1-\lambda h(E)|^{-2}$,
and $\phi_{E,+}$ is the positive momentum component of $\phi_E$.
\begin{figure}[t]
\begin{center}
\includegraphics[angle=0,width=0.9\linewidth]{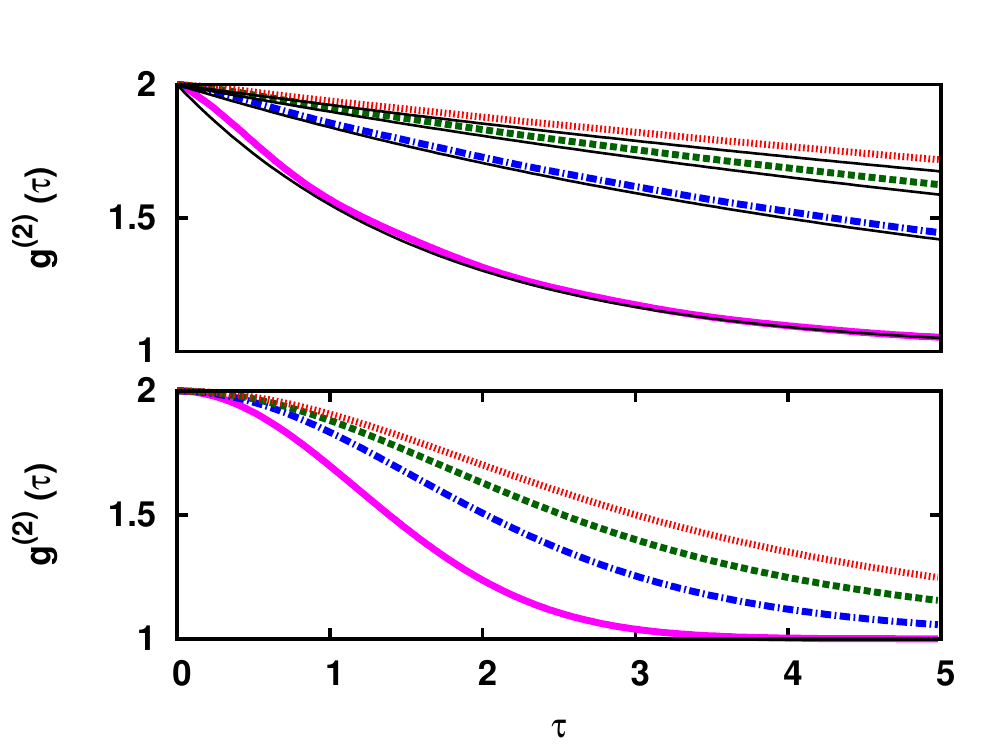}
\end{center}
\caption{\label{fig1} (Color online)
Second order normalized correlation function $g^{(2)} (\tau)$.
Upper figure: Lorentzian, lower figure: Gaussian initial spectral density.
Boltzmann model (thick solid line), quasi-free state given by
\eqref{hatsiglimit} with rates $\gamma=0.5,1.0,1.5$ (bottom to top curves).}
\end{figure}
In Fig.~\ref{fig1}, $g^{(2)} (\tau)$ is shown for a
Lorentzian $\fabsq{\phi_{E,+}} \sim (\alpha/2)/((E-E_0)^2 +
(\alpha/2)^2)/\pi$ and a Gaussian $\fabsq{\phi_{E,+}} \sim
\fexp{-(E-E_0)^2/(2\alpha^2)}/(\sqrt{2\pi}\alpha)$, with no negative momentum components.
In the case of the Lorentzian the correlation function can be approximated
for $E_0 \gg \alpha$ by $g^{(2)} \approx 1 + e^{-\tau(\alpha-2\lambda)}$ (see
the solid lines in Fig.~\ref{fig1}), i.e., as an exponential modified by the intensity parameter $\lambda$, as Bose statistical effects become more relevant.

\subsection{Plane wave beams}
The energy density of the beam is $\tr\hatsigma(E)$, which clearly needs to be
integrable. In the Fermi case the constraint $\hatsigma\leq\idty$ excludes
singularities in this density. However, in the Bose case, we can also consider
singular distributions.
Let us assume a one-dimensional, free Hamiltonian and we are restrict to only
positive momenta to simplify the notation.
In this case the beam state is given by
\begin{eqnarray*}
\hat\sigma = \int_0^\infty dE\, \alpha(E) \ketbra{E}{E}\,.
\end{eqnarray*}
where $\ket{E_}$ are the generalized energy eigenvectors.
Consider a sequence of functions $\alpha_n$ with $\alpha_n(E)
\stackrel{n\to\infty}\longrightarrow \kappa \delta (E-E_0) =: \alpha (E)$.
The characteristic function in this case is
\begin{eqnarray*}
C(f) = \left(1-\kappa \bra{E_0}F[e^{if}-1]\ket{E_0}\right)^{-1}\,.
\end{eqnarray*}
The rate is then given by $\gamma_Q = \kappa \bra{E_0}F[f]\ket{E_0}$.
We can get the number distributions $p_n$ for
a measurement result in an interval $Y$ from the characteristic function,
see \eqref{numberdist}.
This characteristic function for for $\alpha (E) = \kappa \delta (E-E_0)$ is
\begin{eqnarray}
C(\lambda \chi_Y)&=&\frac{1}{1-(e^{i\lambda}-1)\kappa\bra{E_0}F(Y)\ket{E_0}}\nonumber\\
&=& \sum_{n=0}^\infty \frac{q^n}{(1+q)^{n+1}} e^{i\lambda n}
\label{num2}
\end{eqnarray}
where $q= \kappa\bra{E_0}F(I)\ket{E_0}$. By comparing \eqref{num2} and
\eqref{numberdist}, we get the number
distribution for a detection in the interval $Y$, namely
$p_{Q,n} = \frac{q^n}{(1+q)^{n+1}}$.
Fig.~\ref{fig2} shows examples of the number distribution for an arrival-time
measurement with $q=\sqrt{10}$.

It is illustrative to compare it with the number statistics
of a coherent beam given by \eqref{cfcoherent}
with $\phi (E) = \sqrt{\kappa} \delta (E-E_0)$.
The characteristic function
is then $C(f) = \exp\left(\kappa \bra{E_0}|F[e^{if}-1] \ket{E_0}\right)$
and from this we get the rate $\gamma_C = \kappa
\bra{E_0}F[f]\ket{E_0}=\gamma_Q$ and a Poisson number distribution
$p_{C,n}= \frac{1}{n!} q^n e^{-q}$
which is also shown in Fig.~\ref{fig2}.
\begin{figure}[t]
\begin{center}
\includegraphics[angle=0,width=0.9\linewidth]{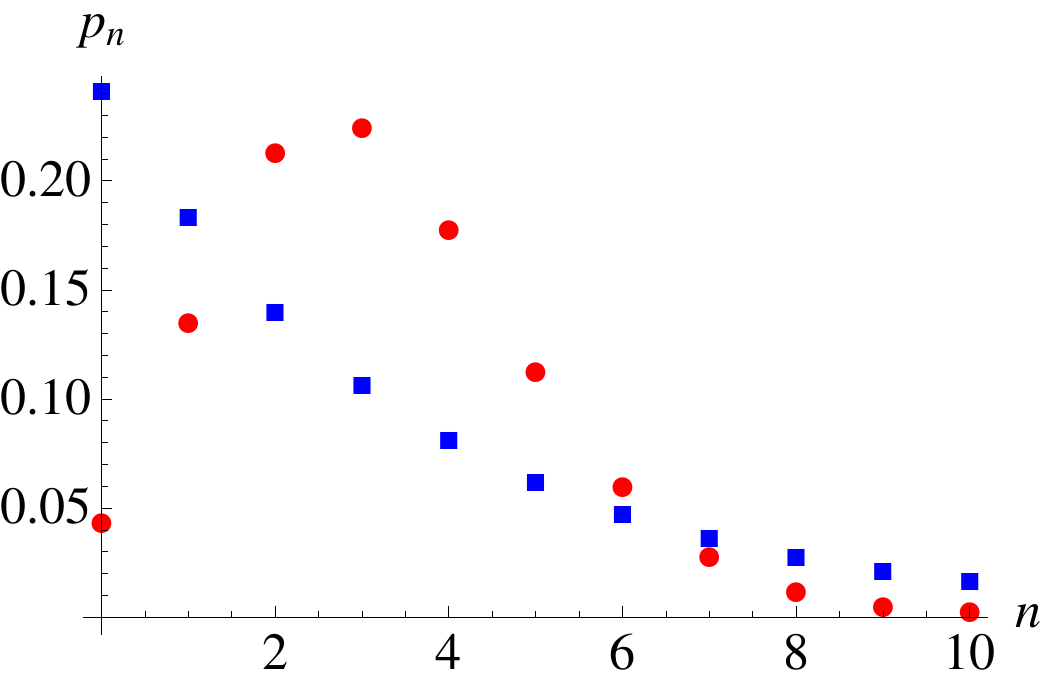}
\end{center}
\caption{\label{fig2} (Color online)
Particle number distribution $p_n$ with $q=\sqrt{10}$,
quasi-free beam $p_{Q,n}$ (blue boxes),
coherent beam $p_{C,n}$ (red circles).}
\end{figure}

Let us look at a Kijowski's arrival time
measurement in more detail. In that case, we get the rate $\gamma_K =
\frac{\kappa}{2\pi}$ and $q_K = \gamma_K \,l(Y)$ where $l(Y)=\int_Y dt$.
For a quasi-free beam we get for the probability for no detection in
an interval $[t_1,t_2]$ is $p_{Q,0}(t_1, t_2) = 1/(1+\gamma_K (t_2-t_1))$
and for a coherent beam $p_{C,0}= e^{-\gamma_K (t_2, t_1)q}$.
The waiting time calculated by \eqref{nextclickdensity}
is now for a quasi-free beam
\begin{equation}
w_{Q} (\tau)=\frac{2\gamma_K}{(1+\gamma_K \tau)^3}
\end{equation}
and for a coherent beam
\begin{equation}
w_C (\tau)= \gamma_K e^{-\gamma_K\tau}
\end{equation}
which is -as expected- an exponential distribution.

\section*{Acknowledgment} The authors acknowledge support by the BMBF (Ephquam
project), the EU project CORNER, and the Academy of Finland.


\begin{thebibliography}{10}%
\makeatletter
\providecommand \@ifxundefined [1]{%
 \ifx #1\undefined \expandafter \@firstoftwo
 \else \expandafter \@secondoftwo
\fi
}%
\providecommand \@ifnum [1]{%
 \ifnum #1\expandafter \@firstoftwo
 \else \expandafter \@secondoftwo
\fi
}%
\providecommand \enquote [1]{``#1''}%
\providecommand \bibnamefont  [1]{#1}%
\providecommand \bibfnamefont [1]{#1}%
\providecommand \citenamefont [1]{#1}%
\providecommand\href[0]{\@sanitize\@href}%
\providecommand\@href[1]{\endgroup\@@startlink{#1}\endgroup\@@href}%
\providecommand\@@href[1]{#1\@@endlink}%
\providecommand \@sanitize [0]{\begingroup\catcode`\&12\catcode`\#12\relax}%
\@ifxundefined \pdfoutput {\@firstoftwo}{%
 \@ifnum{\z@=\pdfoutput}{\@firstoftwo}{\@secondoftwo}%
}{%
 \providecommand\@@startlink[1]{\leavevmode}%
 \providecommand\@@endlink[0]{}%
}{%
 \providecommand\@@startlink[1]{%
  \leavevmode
  \pdfstartlink
   attr{/Border[0 0 1 ]/H/I/C[0 1 1]}%
   user{/Subtype/Link/A<</Type/Action/S/URI/URI(#1)>>}%
  \relax
 }%
 \providecommand\@@endlink[0]{\pdfendlink}%
}%
\providecommand \url  [0]{\begingroup\@sanitize \@url }%
\providecommand \@url [1]{\endgroup\@href {#1}{\urlprefix}}%
\providecommand \urlprefix [0]{URL }%
\providecommand \Eprint[0]{\href }%
\@ifxundefined \urlstyle {%
  \providecommand \doi [1]{doi:\discretionary{}{}{}#1}%
}{%
  \providecommand \doi [0]{doi:\discretionary{}{}{}\begingroup
  \urlstyle{rm}\Url }%
}%
\providecommand \doibase [0]{http://dx.doi.org/}%
\providecommand \Doi[1]{\href{\doibase#1}}%
\providecommand \bibAnnote [3]{%
  \BibitemShut{#1}%
  \begin{quotation}\noindent
    \textsc{Key:}\ #2\\\textsc{Annotation:}\ #3%
  \end{quotation}%
}%
\providecommand \bibAnnoteFile [2]{%
  \IfFileExists{#2}{\bibAnnote {#1} {#2} {\input{#2}}}{}%
}%
\providecommand \typeout [0]{\immediate \write \m@ne }%
\providecommand \selectlanguage [0]{\@gobble}%
\providecommand \bibinfo [0]{\@secondoftwo}%
\providecommand \bibfield [0]{\@secondoftwo}%
\providecommand \translation [1]{[#1]}%
\providecommand \BibitemOpen[0]{}%
\providecommand \bibitemStop [0]{}%
\providecommand \bibitemNoStop [0]{.\EOS\space}%
\providecommand \EOS [0]{\spacefactor3000\relax}%
\providecommand \BibitemShut [1]{\csname bibitem#1\endcsname}%
\bibitem{Milburn}%
  \BibitemOpen
  \bibfield{author}{%
  \bibinfo {author} {\bibfnamefont{D.}~\bibnamefont{Walls}}\ and\ \bibinfo
  {author} {\bibfnamefont{G.}~\bibnamefont{Milburn}},\ }%
  \emph{\bibinfo {title} {Quantum Optics}}\ (\bibinfo {publisher} {Springer},\
  \bibinfo {year} {1994})%
  \bibAnnoteFile{NoStop}{Milburn}%
\bibitem{Yuan}%
  \BibitemOpen
  \bibfield{author}{%
  \bibinfo {author} {\bibfnamefont{Z.}~\bibnamefont{Yuan}},\ }%
  \bibfield{journal}{%
  \bibinfo {journal} {Science}\ }%
  \textbf{\bibinfo {volume} {295}},\ \bibinfo {pages} {102} (\bibinfo {year}
  {2002})%
  \bibAnnoteFile{NoStop}{Yuan}%
\bibitem{Glauber}%
  \BibitemOpen
  \bibfield{author}{%
  \bibinfo {author} {\bibfnamefont{R.~J.}\ \bibnamefont{Glauber}},\ }%
  \bibfield{journal}{%
  \bibinfo {journal} {Phys. Rev.}\ }%
  \textbf{\bibinfo {volume} {130}},\ \bibinfo {pages} {2529} (\bibinfo {year}
  {1963})%
  \bibAnnoteFile{NoStop}{Glauber}%
\bibitem{Daley}%
  \BibitemOpen
  \bibfield{author}{%
  \bibinfo {author} {\bibfnamefont{D.}~\bibnamefont{Daley}}\ and\ \bibinfo
  {author} {\bibfnamefont{D.}~\bibnamefont{Vere-Jones}},\ }%
  \emph{\bibinfo {title} {An Introduction to the Theory of Point Processes (2
  vols)}}\ (\bibinfo {publisher} {Springer},\ \bibinfo {year} {1988})%
  \bibAnnoteFile{NoStop}{Daley}%
\bibitem{Benard}%
  \BibitemOpen
  \bibfield{author}{%
  \bibinfo {author} {\bibfnamefont{C.}~\bibnamefont{B\'enard}},\ }%
  \bibfield{journal}{%
  \bibinfo {journal} {Phys. Rev. A}\ }%
  \textbf{\bibinfo {volume} {2}},\ \bibinfo {pages} {2140} (\bibinfo {year}
  {1970})%
  \bibAnnoteFile{NoStop}{Benard}%
\bibitem{BenardMacchi}%
  \BibitemOpen
  \bibfield{author}{%
  \bibinfo {author} {\bibfnamefont{C.}~\bibnamefont{B\'enard}}\ and\ \bibinfo
  {author} {\bibfnamefont{O.}~\bibnamefont{Macchi}},\ }%
  \bibfield{journal}{%
  \bibinfo {journal} {J. Math. Phys.}\ }%
  \textbf{\bibinfo {volume} {14}},\ \bibinfo {pages} {155} (\bibinfo {year}
  {1973})%
  \bibAnnoteFile{NoStop}{BenardMacchi}%
\bibitem{SimonTrace}%
  \BibitemOpen
  \bibfield{author}{%
  \bibinfo {author} {\bibfnamefont{B.}~\bibnamefont{Simon}},\ }%
  \emph{\bibinfo {title} {Trace Ideals and Their Applications (2nd ed.)}}\
  (\bibinfo {publisher} {Am. Math. Soc.},\ \bibinfo {address} {Providence,
  R.I.},\ \bibinfo {year} {2005})%
  \bibAnnoteFile{NoStop}{SimonTrace}%
\bibitem{CCR}%
  \BibitemOpen
  \bibfield{author}{%
  \bibinfo {author} {\bibfnamefont{J.}~\bibnamefont{Manuceau}}\ and\ \bibinfo
  {author} {\bibfnamefont{A.}~\bibnamefont{Verbeure}},\ }%
  \bibfield{journal}{%
  \bibinfo {journal} {Commun. Math. Phys.}\ }%
  \textbf{\bibinfo {volume} {9}},\ \bibinfo {pages} {293} (\bibinfo {year}
  {1968})%
  \bibAnnoteFile{NoStop}{CCR}%
\bibitem{CAR}%
  \BibitemOpen
  \bibfield{author}{%
  \bibinfo {author} {\bibfnamefont{J.}~\bibnamefont{Manuceau}}\ and\ \bibinfo
  {author} {\bibfnamefont{A.}~\bibnamefont{Verbeure}},\ }%
  \bibfield{journal}{%
  \bibinfo {journal} {Commun. Math. Phys.}\ }%
  \textbf{\bibinfo {volume} {18}},\ \bibinfo {pages} {319} (\bibinfo {year}
  {1970})%
  \bibAnnoteFile{NoStop}{CAR}%
\bibitem{Shoshnikov}%
  \BibitemOpen
  \bibfield{author}{%
  \bibinfo {author} {\bibfnamefont{A.}~\bibnamefont{Soshnikov}},\ }%
  \bibfield{journal}{%
  \bibinfo {journal} {Russ. Math. Surv.}\ }%
  \textbf{\bibinfo {volume} {55}},\ \bibinfo {pages} {923} (\bibinfo {year}
  {2000})%
  \bibAnnoteFile{NoStop}{Shoshnikov}%
\bibitem{Fichtner}%
  \BibitemOpen
  \bibfield{author}{%
  \bibinfo {author} {\bibfnamefont{K.-H.}\ \bibnamefont{Fichtner}}\ and\
  \bibinfo {author} {\bibfnamefont{W.}~\bibnamefont{Freudenberg}},\ }%
  \bibfield{journal}{%
  \bibinfo {journal} {J. Stat. Phys.}\ }%
  \textbf{\bibinfo {volume} {47}},\ \bibinfo {pages} {959} (\bibinfo {year}
  {1987})%
  \bibAnnoteFile{NoStop}{Fichtner}%
\bibitem{Lytvynov}%
  \BibitemOpen
  \bibfield{author}{%
  \bibinfo {author} {\bibfnamefont{E.}~\bibnamefont{Lytvynov}},\ }%
  \bibfield{journal}{%
  \bibinfo {journal} {Rev. Math. Phys.}\ }%
  \textbf{\bibinfo {volume} {14}},\ \bibinfo {pages} {1073} (\bibinfo {year}
  {2002})%
  \bibAnnoteFile{NoStop}{Lytvynov}%
\bibitem{Wer89a}%
  \BibitemOpen
  \bibfield{author}{%
  \bibinfo {author} {\bibfnamefont{R.~F.}\ \bibnamefont{Werner}},\ }%
  \enquote{\bibinfo {title} {Inequalities expressing the {P}auli principle for
  generalized observables},}\ in\ \emph{\bibinfo {booktitle} {Mathematical
  methods in statistical mechanics}}\ (\bibinfo {publisher} {Leuven Univ.
  Press},\ \bibinfo {year} {1989})\ pp.\ \bibinfo {pages} {179--196}%
  \bibAnnoteFile{NoStop}{Wer89a}%
\bibitem{Glauberbook}%
  \BibitemOpen
  \bibfield{author}{%
  \bibinfo {author} {\bibfnamefont{R.~J.}\ \bibnamefont{Glauber}},\ }%
  \emph{\bibinfo {title} {Quantum Theory of Optical Coherence, Selected papers
  and lectures}}\ (\bibinfo {publisher} {Wiley-VCH},\ \bibinfo {address}
  {Weinheim},\ \bibinfo {year} {2007})%
  \bibAnnoteFile{NoStop}{Glauberbook}%
\bibitem{Kijowski}%
  \BibitemOpen
  \bibfield{author}{%
  \bibinfo {author} {\bibfnamefont{J.}~\bibnamefont{Kijowski}},\ }%
  \bibfield{journal}{%
  \bibinfo {journal} {Rep. Math. Phys.}\ }%
  \textbf{\bibinfo {volume} {6}},\ \bibinfo {pages} {361} (\bibinfo {year}
  {1974})%
  \bibAnnoteFile{NoStop}{Kijowski}%
\bibitem{Wer86}%
  \BibitemOpen
  \bibfield{author}{%
  \bibinfo {author} {\bibfnamefont{R.~F.}\ \bibnamefont{Werner}},\ }%
  \bibfield{journal}{%
  \bibinfo {journal} {J. Math. Phys.}\ }%
  \textbf{\bibinfo {volume} {27}},\ \bibinfo {pages} {793} (\bibinfo {year}
  {1986})%
  \bibAnnoteFile{NoStop}{Wer86}%
\bibitem{Mackey}%
  \BibitemOpen
  \bibfield{author}{%
  \bibinfo {author} {\bibfnamefont{G.~W.}\ \bibnamefont{Mackey}},\ }%
  \bibfield{journal}{%
  \bibinfo {journal} {Proc. Nat. Acad. Sci. USA}\ }%
  \textbf{\bibinfo {volume} {35}},\ \bibinfo {pages} {537} (\bibinfo {year}
  {1949})%
  \bibAnnoteFile{NoStop}{Mackey}%
\bibitem{Kadison}%
  \BibitemOpen
  \bibfield{author}{%
  \bibinfo {author} {\bibfnamefont{R.~V.}\ \bibnamefont{Kadison}}\ and\
  \bibinfo {author} {\bibfnamefont{J.~R.}\ \bibnamefont{Ringrose}},\ }%
  \emph{\bibinfo {title} {Fundamentals of the Theory of Operator Algebras, Vol
  II}}\ (\bibinfo {publisher} {Academic Press},\ \bibinfo {year} {1986})%
  \bibAnnoteFile{NoStop}{Kadison}%
\bibitem{BraRob2}%
  \BibitemOpen
  \bibfield{author}{%
  \bibinfo {author} {\bibfnamefont{O.}~\bibnamefont{Bratteli}}\ and\ \bibinfo
  {author} {\bibfnamefont{D.~W.}\ \bibnamefont{Robinson}},\ }%
  \emph{\bibinfo {title} {Operator Algebras and Quantum Statistical Mechanics
  II}}\ (\bibinfo {publisher} {Springer},\ \bibinfo {address} {Berlin},\
  \bibinfo {year} {1997})%
  \bibAnnoteFile{NoStop}{BraRob2}%
\bibitem{klich}%
  \BibitemOpen
  \bibfield{author}{%
  \bibinfo {author} {\bibfnamefont{J.}~\bibnamefont{Avron}}, \bibinfo {author}
  {\bibfnamefont{S.}~\bibnamefont{Bachmann}}, \bibinfo {author}
  {\bibfnamefont{G.}~\bibnamefont{Graf}},\ and\ \bibinfo {author}
  {\bibfnamefont{I.}~\bibnamefont{Klich}},\ }%
  \bibfield{journal}{%
  \bibinfo {journal} {Commun. Math. Phys.}\ }%
  \textbf{\bibinfo {volume} {280}},\ \bibinfo {pages} {807–829} (\bibinfo
  {year} {2008})%
  \bibAnnoteFile{NoStop}{klich}%
\bibitem{ReeSim4det}%
  \BibitemOpen
  \bibfield{author}{%
  \bibinfo {author} {\bibfnamefont{M.}~\bibnamefont{Reed}}\ and\ \bibinfo
  {author} {\bibfnamefont{B.}~\bibnamefont{Simon}},\ }%
  \emph{\bibinfo {title} {Methods of Modern Mathematical Physics IV: Analysis
  of operators. Chapter XIII.17}}\ (\bibinfo {publisher} {Academic Press, New
  York},\ \bibinfo {year} {1972})%
  \bibAnnoteFile{NoStop}{ReeSim4det}%
\bibitem{Alicki}%
  \BibitemOpen
  \bibfield{author}{%
  \bibinfo {author} {\bibfnamefont{R.}~\bibnamefont{Alicki}},\ }%
  \emph{\bibinfo {title} {General Theory and Applications to Unstable
  Particles, Lect. Notes Phys. 717, 1-46}}\ (\bibinfo {publisher} {Springer},\
  \bibinfo {address} {Berlin},\ \bibinfo {year} {2007})%
  \bibAnnoteFile{NoStop}{Alicki}%
\bibitem{Spohn}%
  \BibitemOpen
  \bibfield{author}{%
  \bibinfo {author} {\bibfnamefont{M.}~\bibnamefont{Butz}}\ and\ \bibinfo
  {author} {\bibfnamefont{H.}~\bibnamefont{Spohn}},\ }%
  \bibfield{journal}{%
  \bibinfo {journal} {Ann. Henri Poincar\'e}\ }%
  \textbf{\bibinfo {volume} {10}},\ \bibinfo {pages} {1223} (\bibinfo {year}
  {2010})%
  \bibAnnoteFile{NoStop}{Spohn}%
\end{thebibliography}
\end{document}